\newtheorem{theorem}{Theorem}
\newtheorem{lemma}[theorem]{Lemma}
 \newcommand{\Id}{{\mathbb 1}}
 \newcommand{\rank}{{\mathrm rank}}
 \newcommand{\T}{{\mathrm t}}
 \newcommand{\can}{{\mathrm can}}
\newcommand{\Tr}{{\mathrm {Tr}}}
\newcommand{\diag}{{\mathrm {diag}}}
\begin{document}
\preprint{APS/123-QED}

\title{Measuring quantum discord using  the most distinguishable steered states  }

\author{Vahid Nassajpour}
\author{Seyed Javad Akhtarshenas}
\email{akhtarshenas@um.ac.ir}
\affiliation{Department of Physics,  Ferdowsi University of Mashhad, Mashhad, Iran}

\begin{abstract}
Any two-qubit state can be  represented, geometrically, as an ellipsoid  with a certain  size and a center  located within the Bloch sphere of one of the qubits. Points of this ellipsoid represent the post-measurement states when  the other qubit  is measured.  Based on the most demolition concept in the definition of quantum discord, we  study the amount of demolition when the two post-measurement states, represented as two points on the steering ellipsoid, have the most  distinguishability.   We use trace distance as a measure of distinguishability and obtain the maximum distinguishability for some classes of states, analytically. Using the optimum measurement that gives the most distinguishable steered states, we extract quantum correlation of the state and compare the result  with the quantum discord. It is shown that  there are some important classes of states for which the most demolition happens exactly at the most distinguished steered points. Correlations gathered from the most distinguished post-measurement states provide  a faithful and  tight upper bound touching the quantum discord in most of the cases.

\end{abstract}

\keywords{Quantum steering ellipsoid, Quantum correlations, Quantum discord, Demolition}

\maketitle
\section{Introduction}

In a   bipartite quantum system containing   some kind of correlations,  when one side is measured locally, the state of the other side may be collapsed to some specified states. It means that one side, say Bob's side,  can steer the state of the  other side, say Alice's side,  just by performing local measurement on his particle. This notion of quantum steering, introduced by Schr\"{o}dinger \cite{Schrodinger1935,Schrodinger1936}, is closely related  to the  concept of EPR nonlocality \cite{EPR1935}.  The states to which Alice's particle steers to can be specified by the basis on which Bob performs  measurement on his particle.   Considering  all positive operator  valued measures (POVMs), Bob can steer  the Alice's particle  to a set of post-measurement states. In the case of  two-qubit systems, this set of post-measurement states forms an   ellipsoid, i.e. the so-called quantum steering ellipsoid (QSE),  living in the Alice's Bloch sphere \cite{MilnePRL2014-1}. This ellipsoid  is unique up to the  local unitary transformations for  any two-qubit system \cite{MilnePRL2014-1}. Having this  geometry, it is useful to study some non-classical features of composite systems such as entanglement, separability, negativity, fully entangled fraction, quantum discord, Bell non-locality, monogamy, EPR steering and even the dynamic of a quantum system \cite{MilnePRL2014-1,MilneNJP2014-2,MilneNJP2015,MilnePRA2014,HuPRA2015,ShiNJP2011,WangPRA2014,CloskeyPRA2017,Schrodinger1935,WisemanPRA2007,
WisemanPRL2007,CavalcantiPRL2014}. When the results of the measurement are not recorded, the  measurement  performed locally by Bob cannot affect the Alice's reduced density matrix. Therefore the ensemble average of the Alice's Bloch vectors of the    post-measurement states,  produced by a set of POVM on the Bob's part,  must be equal to the  coherence vector of the Alice's reduced state, meaning that  the coherence vector lies inside the ellipsoid.   In particular,  when Bob's reduced state  is totally mixed, the Alice's coherence vector coincides on her ellipsoid center \cite{MilnePRL2014-1}. Such a state is called \enquote{canonical state}.

Conversely, we can reconstruct a two-qubit state from its ellipsoid, given the coherence vectors of two parts \cite{MilnePRL2014-1}. However, not any ellipsoid can belong to a physical state. For example, any physical ellipsoid touches the Bloch sphere  at most at two points unless it is the whole  Bloch sphere \cite{BraunJPMT2014}. Given the ellipsoid center, authors in \cite{MilneNJP2014-2} have been studied conditions of physicality and separability of canonical states. Based on the Peres-Horodecki criterion \cite{PresPRL1996,HorodeckiPL1996}, the authors of  \cite{MilnePRL2014-1} have shown   that the separability of  the canonical states  depends on the shape of their ellipsoids.

All of the above symmetric features can be observed from the Bob's ellipsoid which its dimension  is the same as the Alice's one \cite{MilnePRL2014-1}. Quantum discord (QD) \cite{ZurekPRL2002,VedralJPA2001}  is an asymmetric measure of quantum correlations that could be obtained by eliminating the classical correlation from the total correlation, measured by the mutual information, by means of the most destructive measurement on the one party of the system (for a review on quantum discord see \cite{ModiRMP2012}).  The total  information shared between  parts of a bipartite  quantum state $\rho$ is given by
\begin{equation}
I({\rho }) = S({\rho ^A}) + S({\rho ^B}) - S({\rho ^{AB}}),
\label{10}
\end{equation}
where $\rho^A=\Tr_B{\rho}$  is the reduced density matrix of the Alice's side, and $\rho^{B}$ is defined similarly. Moreover, $S(\rho )=-\Tr[\rho \,{\log _2}\,\rho ]$ is the von Neumann entropy of the state $\rho$.
Quantum discord at Bob's side reads \cite{ZurekPRL2002}
\begin{equation}\label{QD}
{Q_B}({\rho})=I({\rho})-C_B({\rho}),
\end{equation}
where
\begin{eqnarray}\label{CB}
C_B({\rho })=\mathop {\sup }\limits_{\{ \Pi _k^B\} } \{ S({\rho ^A}) - S({\rho ^A}|\,\{ \Pi _k^B\} )\}.
\end{eqnarray}
Here $S\left( {{\rho ^A}|\,\{ \Pi _k^B\} } \right)=\sum_{k}p_kS(\rho^A_k)$ is the Alice's conditional entropy due to the Bob's measurement.
Equation \eqref{CB} shows that  in order to calculate quantum discord we shall be concerned about the set $\{\Pi _k^B\}$
of all measurements on the Bob's qubit \cite{ZurekPRL2002}. This allows one to extract the most information about the Alice's qubit.

Algorithms to evaluate quantum discord for a general two-qubit state are presented \cite{GirolamiPRA2011,AkhtarshenasIJTP2015}. However,  the optimization problem  requires the solution to a pair of transcendental equations which   involve logarithms of nonlinear quantities \cite{GirolamiPRA2011}. This prevents one to write an analytical expression for the quantum discord even for the simplest case of two-qubit states.  Indeed, quantum discord is analytically computed only for a few families of states including the
Bell-diagonal states \cite{LuoPRA2008,LangPRL2010}, two-qubit $X$ states \cite{AliPRA2010,ChenPRA2010} and two-qubit rank-2 states \cite{ShiJPA2011}.
Using the  Choi-Jamio\l kowski isomorphism, the authors of \cite{WuQINP2015} obtained  the transcendental equations and shown that for a general two-qubit state  they always have  a finite set of universal solutions, however, for some cases such as a subclass of $X$ states, the transcendental equations may offer analytical solutions.

In this paper we use the notion of distinguishability of the Alice's outcomes and  look to those measurements on Bob's
qubit that lead to the most distinguishability of the  Alice's steered states. We show that  such obtained optimum measurement coincides in some cases with the optimum measurement   of Eq. \eqref{CB}.   The correlations  gathered from the most distinguished measurements  give, in general,  a tight upper bound for the  quantum discord.

The paper is organized as follows. In Section II we present  our terminology and provide a brief review for quantum steering ellipsoid. In section III the notion of distinguishability of the Alice's  outcomes is defined and we provide some important classes of states for which the maximum distinguishability can be calculated, analytically. Section IV is devoted to compare our results with quantum discord. The paper is conclude in section V with a brief conclusion.

\section{Framework: Quantum Steering Ellipsoid}
We start from a two-qubit state in the general form as
\begin{equation}\label{Rho-TwoQubit}
\rho  = \frac{1}{4}\left(\Id \otimes \Id+\boldsymbol{x}\cdot\boldsymbol{\sigma}\otimes\Id+ \Id \otimes \boldsymbol{y}\cdot\boldsymbol{\sigma}  + \sum_{i,j=1}^{3}{t_{ij}}{\sigma _i} \otimes {\sigma _j}\right),
\end{equation}
where $\boldsymbol{x}$  and $\boldsymbol{y}$ are Alice and Bob coherence vectors, respectively,  $T=[t_{ij}]$ is the correlation matrix, $\boldsymbol{\sigma}=(\sigma_1,\sigma_2,\sigma_3)$ are the Pauli matrices,  and $\Id$ denotes the unit $2\times 2$ matrix. If Bob performs a projective measurement
\begin{equation}\label{ProjectiveBob}
\Pi_k^B =\frac{1}{2}\left(\Id +\hat{\boldsymbol{n}}_k\cdot \boldsymbol{\sigma}\right), \qquad k = 0,1,
\end{equation}
on his qubit, where $\hat{\boldsymbol{n}}_0=(\sin{\theta}\cos{\phi},\sin{\theta}\sin{\phi},\cos{\theta})^\T=-\hat{\boldsymbol{n}}_1$ and $\T$ denotes the transposition,  the shared bipartite state  collapses to
\begin{equation}\label{RhoAB-PostMeasurement}
\rho= {p_0}\rho _0^A \otimes \Pi _0^B + {p_1}\rho _1^A \otimes \Pi _1^B,
\end{equation}
with
\begin{eqnarray}\label{RhoAk}
\rho_k^A = \frac{1}{2}(\Id + \widetilde{\boldsymbol{x}}_k\cdot\boldsymbol{\sigma}),
\end{eqnarray}
as the post-measurement state of the Alice's side associated with the outcome $k$,  with the corresponding probability
\begin{eqnarray}\label{pk}
p_k =\frac{1}{2}(1 + \boldsymbol{y}\cdot \hat{\boldsymbol{n}}_k).
\end{eqnarray}
Above,  the Alice's post-measurement coherence vector  $\widetilde{\boldsymbol{x}}_k$ is defined by
\begin{equation}\label{xTilde}
\widetilde{\boldsymbol{x}}_k = \frac{{\boldsymbol{x} + T\hat{\boldsymbol{n}}_k}}{1 + \boldsymbol{y}\cdot\hat{\boldsymbol{n}}_k},
\end{equation}
for $k=0,1$.

{\it Canonical states.---}As we mentioned previously, canonical states refer to states for which the Bob's reduced state  is totally mixed, so   $\boldsymbol{y}^{(\can)}=0$.
For such states, it is  easy to construct  Alice's ellipsoid from the above formalism. In this particular case, Alice's post-measurement Bloch vector \eqref{xTilde} reduces to
\begin{equation}\label{xTilde-Canonic}
\widetilde{\boldsymbol{x}}^{(\can)}_k= \boldsymbol{x}^{(\can)}+ T^{(\can)}\hat{\boldsymbol{n}}_k,
\end{equation}
with probability $p^{(\can)}_k=\frac{1}{2}$ for  $k=0,1$.
Since the unit vector $\hat{\boldsymbol{n}}_k$ defines a unit sphere centered at origin, the above equation states  that the set of all points Alice's coherence vector steers to forms an  ellipsoid. This canonical ellipsoid, associated with the canonical state $\rho^{(\can)}$ for which $\boldsymbol{y}^{(can)}=0$,   is  obtained by shrinking and rotating the sphere $\hat{\boldsymbol{n}}_k$  by  matrix $T^{(\can)}$, and then translating it by vector $\boldsymbol{x}^{(\can)}$ \cite{MilnePRL2014-1}.

Interestingly, a canonical state can be obtained from a general state  by local filtering transformation (LFT) \cite{VerstraetePRA2001}. More precisely, starting from a generic two-qubit state $\rho$ with nonzero Bob's coherence vector $\boldsymbol{y}$, one can obtain the canonical state $\rho^{({\can})}$  with $\boldsymbol{y}^{(\can)}=0$  as \cite{MilnePRL2014-1}
\begin{eqnarray}
\rho^{({\can})}&=&\left(\Id \otimes\frac{\Id}{\sqrt{2\rho^B}}\right) \rho \left(\Id \otimes \frac{\Id}{\sqrt{2\rho^B}} \right) \\ \nonumber
&=& \frac{1}{4}\left(\Id \otimes \Id+\boldsymbol{x}^{(\can)}\cdot\boldsymbol{\sigma}\otimes\Id+ \sum_{i,j=1}^{3}{t_{ij}^{(\can)}}{\sigma _i} \otimes {\sigma _j}\right),
\end{eqnarray}
where  the first line  denotes a local filtering transformation on the general state $\rho$.
It is shown that the physicality and separability of states are unchanged under LFT \cite{MilneNJP2014-2}. Furthermore, the Alice's ellipsoid is invariant under LFT on Bob's side,  therefore the LFT makes orbits such that states on the same orbit have equal ellipsoids \cite{MilnePRL2014-1}. In view of this, the canonical states can be considered as the  representatives on the corresponding orbits,  therefore physicality and separability of the states on a general orbit can be determined from the ones  of the canonical states.

\section{Measuring Bob's qubit with the most disruptive Alice's qubit}
As we mentioned already, in order to calculate  quantum discord we shall be concerned about the set  $\{ \Pi _k^B\}$ of all measurements on the Bob's qubit. This allows one to extract the most information about the Alice's qubit \cite{ZurekPRL2002} and that at the same time disturbs least the overall quantum state $\rho$.   This corresponds also to finding measurements that  maximize  Eq. \eqref{CB}. When the  results of the measurement  are not recorded, the  measurement on Bob's qubit does not disturb Alice's state $\rho^A$. However, corresponding to the measurement outcomes,  Alice's state steers to  some states $\rho^A_k$ in her ellipsoid  following the route of Eqs. \eqref{RhoAk} and \eqref{pk}. The ability to extract   information about Alice's qubit by measuring Bob's qubit comes from  correlations shared between them  and this is, in general,  accompanied by disrupting the Alice's outcome states.
Now the question arises:   To what extent does the extraction of the most information about Alice's qubit disturb  her outcomes?  To address this question let us consider the Bell-diagonal states, i.e. states described by Eq. \eqref{Rho-TwoQubit} with  $\boldsymbol{x}=\boldsymbol{y}=0$ and $T=\diag\{t_1,t_2,t_3\}$. A comparison between  the Alice's conditional entropy and the Euclidean distance of the Alice's steered states shows that these functions behave, in general, oppositely under local measurement on the Bob's side (see Fig. \ref{Fig1}). In particular, we  observe that the minimum of the conditional entropy coincides with the maximum of the Euclidean distance of  two steered states.  It seems therefore that   extracting the most information about Alice's qubit can be associated with the most disturbing her outcomes states.
\begin{figure}
\mbox{\includegraphics[scale=0.95]{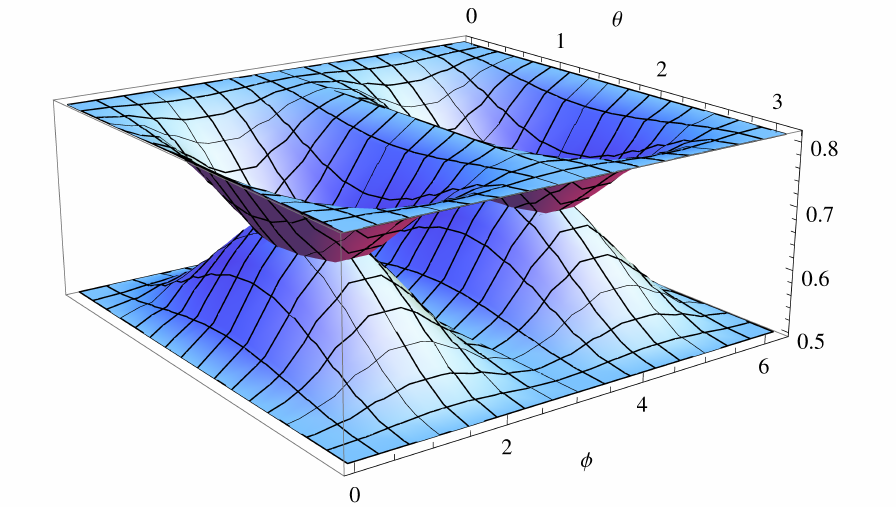}}
\caption{(Color online)  The conditional entropy (upper surface) and the Euclidean distance (lower surface) of  two steered states of the Alice's qubit  as a function of the  Bob's measurement parameters $(\theta,\phi)$ for a Bell-diagonal state with $(t_1,t_2,t_3)=(-0.5,0.7,0.5)$. The figure shows that these functions have opposite behaviour  under local measurement on the Bob's qubit.  }
\label{Fig1}
\end{figure}
Motivated by the above observation, in what follows  we are looking to those  measurements on Bob's qubit that cause the most disturbance in the Alice's post-measurements states. To this aim we use the {\em trace distance} as a measure of quantum distinguishability between two outcomes \cite{NielsenBook2000}
\begin{eqnarray}\label{TrDistance}
D(\rho^A_0,\rho^A_1)=\Tr{|\rho^A_0-\rho^A_1|}.
\end{eqnarray}
This,  in turns,   reduces simply to the Euclidian distance $D(\widetilde{\boldsymbol{x}}_0,\widetilde{\boldsymbol{x}}_1)$ between Bloch vectors of the two post-measurement states $\rho_0^A$ and $\rho_1^A$. For the squared distance we find
\begin{eqnarray}\label{D2}
D^2(\widetilde{\boldsymbol{x}}_0,\widetilde{\boldsymbol{x}}_1)=|\widetilde{\boldsymbol{x}}_0-\widetilde{\boldsymbol{x}}_1|^2 =\frac{4\hat{\boldsymbol{n}}^\T M \hat{\boldsymbol{n}}}{(1-\hat{\boldsymbol{n}}^\T Y \hat{\boldsymbol{n}})^2},
\end{eqnarray}
where $Y=\boldsymbol{y} \boldsymbol{y}^\T$ and $M=m^\T m$ with $m=(T-\boldsymbol{x}\boldsymbol{y}^\T)$. Maximum distinguishability corresponds therefore to the  maximum distance given by
\begin{eqnarray}\label{OptDis}
D^2_{\max}(\widetilde{\boldsymbol{x}}_0,\widetilde{\boldsymbol{x}}_1)=\max_{\hat{\boldsymbol{n}}}\left[\frac{4\hat{\boldsymbol{n}}^\T M \hat{\boldsymbol{n}}}{(1-\hat{\boldsymbol{n}}^\T Y \hat{\boldsymbol{n}})^2}\right],
\end{eqnarray}
where maximum is taken over all unit vectors $\hat{\boldsymbol{n}}\in\mathbb{R}^3$.
Before we proceed further to find conditions under which $D^2(\widetilde{\boldsymbol{x}}_0,\widetilde{\boldsymbol{x}}_1)$ is maximize, let us turn our attention  on some particular cases for which the maximum is  obtained analytically without any need for rigorous optimization.
\\
{\it (i) Canonical states $\boldsymbol{y}=0$.---}For the important class of canonical states for which the Bob's coherence vector is zero, the optimum measurement leading  to the maximum distance between Bloch vectors of the post-measurement states  is nothing but the eigenvector of $T$  corresponding to its largest eigenvalue. Therefore in this case we have $D^2_{\max}(\widetilde{\boldsymbol{x}}_0,\widetilde{\boldsymbol{x}}_1)=4\max\{t_1^2,t_2^2,t_3^2\}$.
The  Bell-diagonal states, for which  the Alice's coherence vector also vanishes, are an important subclass of canonical states.
\\
{\it(ii) States for which $\boldsymbol{y}$ is an eigenvector corresponding to the largest eigenvalue of $M$.---}In this case maximum of the enumerate happens in the direction of coherence vector of the part $B$, i.e. $\max_{\hat{\boldsymbol{n}}}\hat{\boldsymbol{n}}^\T M \hat{\boldsymbol{n}}=\boldsymbol{y}^\T M \boldsymbol{y}/y^2$. For such states we get $D^2_{\max}(\widetilde{\boldsymbol{x}}_0,\widetilde{\boldsymbol{x}}_1)=\left[\frac{4\boldsymbol{y}^\T M \boldsymbol{y}}{y^2(1-y^2)^2}\right]$.
\\
{\it (iii) $X$ states.---}The important class of $X$ states is defined by $\boldsymbol{x}=\left(\begin{array}{ccc}0 & 0 & x \end{array}\right)^\T$, $\boldsymbol{y}=\left(\begin{array}{ccc}0 & 0 & y \end{array}\right)^\T$, and $T=\diag\{t_1,t_2,t_3\}$. In this case  $M$ is also a diagonal matrix given by  $M=\diag\{M_1,M_2,M_3\}$ with $M_1=t_1^2$, $M_2=t_2^2$ and $M_3=(t_3-xy)^2$. In what follows we assume that $|t_1|\ge |t_2|$ ($|t_1|\le |t_2|$ can be obtained just by replacing $1\rightarrow 2$ and $x\rightarrow y$). In this case we find the following results.
\begin{enumerate}
\item $M_1\le M_3$. For such case we get $D_{\max}^2(\widetilde{\boldsymbol{x}}_0,\widetilde{\boldsymbol{x}}_1)=\frac{4M_3}{(1-y^2)^2}$ with $\sigma_z$ as the optimal  measurement.
\item $M_1\ge M_3$. In this case the optimal  measurement is defined by $(\hat{n}^\ast_1)^2=1-(\hat{n}^\ast_3)^2$, $\hat{n}^\ast_2=0$, and  $(\hat{n}^\ast_3)^2=\frac{2M_1y^2-(M_1-M_3)}{(M_1-M_3)y^2}$  if
    \begin{equation}
    \frac{M_1-M_3}{2M_1}\le y^2 \le \frac{M_1-M_3}{M_1+M_3},
    \end{equation}
    or equivalently
    \begin{equation}
    \left(\frac{2M_3}{M_1+M_3}\right)^2\le (1-y^2)^2 \le \left(\frac{M_1+M_3}{2M_1}\right)^2.
\label{19}
    \end{equation}
    On the other hand, the optimal measurement is $\sigma_z$, i.e. $D_{\max}^2(\widetilde{\boldsymbol{x}}_0,\widetilde{\boldsymbol{x}}_1)=\frac{4M_3}{(1-y^2)^2}$, if
    \begin{equation}
    \left(\frac{2M_3}{M_1+M_3}\right)^2\ge (1-y^2)^2,
\label{20}
    \end{equation}
    and it is  $\sigma_x$, i.e. $D_{\max}^2(\widetilde{\boldsymbol{x}}_0,\widetilde{\boldsymbol{x}}_1)=4M_1$, if
    \begin{equation}
    (1-y^2)^2 \ge \left(\frac{M_1+M_3}{2M_1}\right)^2.
\label{21}
 \end{equation}
\end{enumerate}

Now,  after giving the maximum distance for some particular  classes of states without rigorous optimization, we  provide in what follows an analytical procedure for optimization of Eq. \eqref{OptDis}. In order to determine the maximum distance, we have to calculate its derivatives with respect to $\theta$ and $\phi$. For derivative with respect to  $\theta$ we get
\begin{eqnarray}
\frac{\partial D^2}{\partial \theta}=\frac{8{\hat{\boldsymbol{n}}}_{,\theta}^{\T} \mathcal{M}({\hat{\boldsymbol{n}}}) {\hat{\boldsymbol{n}}}}{(1-\hat{\boldsymbol{n}}^\T Y \hat{\boldsymbol{n}})^2},
\end{eqnarray}
where  $\mathcal{M}({\hat{\boldsymbol{n}}})$ is a ${\hat{\boldsymbol{n}}}$-dependent  symmetric matrix given by
\begin{equation}
\mathcal{M}({\hat{\boldsymbol{n}}})=M+\frac{2{\hat{\boldsymbol{n}}}^\T M {\hat{\boldsymbol{n}}}}{(1-\hat{\boldsymbol{n}}^\T Y \hat{\boldsymbol{n}})}Y,
\end{equation}
and the unit vector ${\hat{\boldsymbol{n}}}_{,\theta}$ is defined by
\begin{equation}
{\hat{\boldsymbol{n}}}_{,\theta}=\frac{\partial{\hat{\boldsymbol{n}}}}{\partial \theta}=(\cos{\theta}\cos{\phi},\cos{\theta}\sin{\phi},-\sin{\theta})^{\T}.
\end{equation}
Evidently ${\hat{\boldsymbol{n}}}_{,\theta}\cdot {\hat{\boldsymbol{n}}}=0$.
By defining the nonunit vector $\tilde{\boldsymbol{n}}_{,\phi}$ by
\begin{equation}
{\tilde {\boldsymbol{n}}}_{,\phi}=\frac{\partial{\hat{\boldsymbol{n}}}}{\partial \phi}=(-\sin{\theta}\sin{\phi},\sin{\theta}\cos{\phi},0)^{\T},
\end{equation}
 orthogonal to both ${\hat{\boldsymbol{n}}}$ and ${\hat{\boldsymbol{n}}}_{,\theta}$, we  get a similar equation for the derivative of the distance  with respect to $\phi$, but now ${\hat{\boldsymbol{n}}}_{,\theta}$ is replaced by ${\tilde{ \boldsymbol{n}}}_{,\phi}$.
Excluding the case $y=1$ which happens if and only if the overall state is pure, we find the following relation  for the stationary condition $\frac{\partial D^2}{\partial \theta}=\frac{\partial D^2}{\partial \phi}=0$,
\begin{equation}\label{Stationary}
{\hat{\boldsymbol{n}}}_{\perp}^{\T} \;\mathcal{M}({\hat{\boldsymbol{n}}}) \;{\hat{\boldsymbol{n}}}=0,
\end{equation}
where $\hat{\boldsymbol{n}}_{\perp}$ is any vector perpendicular to $\hat{\boldsymbol{n}}$, i.e. $\hat{\boldsymbol{n}}_{\perp}\cdot\hat{\boldsymbol{n}}=0$.
This implies that the stationary points are achieved if  and only if ${\hat{\boldsymbol{n}}}$ be an eigenvector of $\mathcal{M}({\hat{\boldsymbol{n}}})$. Note that knowing   the extremum points of the distance is not enough to
establish its maximum, and we are required a  further investigation of  the distance over all
extremum points to get  the maximum one. Although   condition  \eqref{Stationary} does not provide an easy  solution for the maximum of the distance, due to the dependence of the symmetric matrix $\mathcal{M}({\hat{\boldsymbol{n}}})$ on the unknown direction  ${\hat{\boldsymbol{n}}}$,   it provides still a simple condition to evaluate   the stationary points numerically.   Not surprisingly, the above stationary condition is fulfilled for  the special classes of states for which we have already obtained the maximum distance without rigorous optimization.

From the discussion given at the beginning of this section, two questions are  being raised.  The first one is that, is there any relation between the optimum measurement associated with the maximum distinguishability of the Alice's outcomes with the one that allows one to extract the most information about the Alice's qubit? We demonstrate in the following section that this is, indeed, the case. To do so, we provide some examples for which these two optimum measurements coincide exactly. The second question is that,  when the optimum measurement of the  maximum distinguished-outcomes process  differs from the most information-gathering one, whether the former  can be used to find a tight and faithful upper bound on the quantum discord?   We will address these questions in the next section.

\section{Maximum distinguished-outcomes measurement versus  the most information-gathering one }
Suppose  Bob performs a measurement on his qubit in the direction $\hat{\boldsymbol{n}}^\ast$  which fulfills the maximum distinguishability condition \eqref{OptDis}. Using this in the definition of quantum discord we find that
\begin{equation}\label{QQast}
Q_B(\rho)\le Q_B^\ast(\rho),
\end{equation}
where $Q_B(\rho)$ is the quantum discord of $\rho$, Eq. \eqref{QD}, and $Q_B^\ast(\rho)$ is
its upper bound  defined by
\begin{eqnarray}\nonumber
{Q_B^\ast}({\rho})&=&S({\rho ^B}) - S({\rho ^{AB}})+  S({\rho ^A}|\,\{ \Pi^{\ast B}_k\}) \\ \label{Qast}
&=& [h_2(\vec{q})-h_4(\vec{\lambda})]+ [h_4(\vec{w})-h_2(\vec{p})].
\end{eqnarray}
Above,    $\Pi^{\ast B}_k=\frac{1}{2}\left(\Id +\hat{\boldsymbol{n}}^\ast_k\cdot \boldsymbol{\sigma}\right)$ for $k = 0,1$ is the optimum measurement  that maximizes $ D^2(\widetilde{\boldsymbol{x}}_0,\widetilde{\boldsymbol{x}}_1)$. Moreover,
$h_m(\vec{x})$ stands for  the   Shannon entropy  of  a probability vector of length $m$. Also $\vec{\lambda}$ and $\vec{q}$ denotes the probability vectors constructed from  the eigenvalues of $\rho$ and $\rho^B$, respectively, and $\vec{w}$ and $\vec{p}$ are two probability vectors of length 4 and 2, respectively, given by \cite{AkhtarshenasIJTP2015}
\begin{eqnarray}\label{wkl}
w_{k,l}&=&\frac{1}{4}\left\{{1+(-)^{k}\boldsymbol{y}\cdot \hat{\boldsymbol{n}}^\ast+(-)^{l} \left|\boldsymbol{x}+(-)^{k}T \hat{\boldsymbol{n}}^\ast\right|}\right\}, \\ \label{pk}
p_{k}&=&\frac{1}{2}\left\{1 +(-)^k \boldsymbol{y}\cdot \hat{\boldsymbol{n}}^\ast\right\},
\end{eqnarray}
for $k,l\in \{0,1\}$.
The following lemma shows that  the above upper bound is faithful in a sense that it vanishes if and only if the bounded quantity vanishes.

\begin{lemma}\label{Faithful}
$Q^\ast_B(\rho)=0$ if and only if $Q_B(\rho)=0$.
\end{lemma}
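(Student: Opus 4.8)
The statement splits into two implications, and the first is immediate. If $Q_B^\ast(\rho)=0$, then Eq.~\eqref{QQast} together with the nonnegativity of quantum discord forces $0\le Q_B(\rho)\le Q_B^\ast(\rho)=0$, so $Q_B(\rho)=0$ (this also shows $Q_B^\ast\ge 0$ always). All the content therefore lies in the converse, $Q_B(\rho)=0\Rightarrow Q_B^\ast(\rho)=0$. The plan is to show that for a zero-discord state the distinguishability-optimal direction $\hat{\boldsymbol{n}}^\ast$ coincides with the Bloch direction of the basis that diagonalizes the classical part of $\rho$, so that the measurement $\{\Pi_k^{\ast B}\}$ already extracts all of the correlation.

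First I would invoke the standard characterization that $Q_B(\rho)=0$ if and only if $\rho$ is classical on Bob's side, i.e.\ $\rho=\sum_{k=0}^{1}p_k\,\rho_k^A\otimes\project{e_k}$ for an orthonormal basis $\{\ket{e_0},\ket{e_1}\}$ of Bob's qubit and one-qubit states $\rho_k^A=\tfrac12(\Id+\boldsymbol{a}_k\cdot\boldsymbol{\sigma})$. Writing $\hat{\boldsymbol{m}}$ for the Bloch direction of $\ket{e_0}$, a comparison of this form with Eq.~\eqref{Rho-TwoQubit} gives $\boldsymbol{x}=p_0\boldsymbol{a}_0+p_1\boldsymbol{a}_1$, $\boldsymbol{y}=(p_0-p_1)\hat{\boldsymbol{m}}$ and $T=(p_0\boldsymbol{a}_0-p_1\boldsymbol{a}_1)\hat{\boldsymbol{m}}^\T$; using $p_0+p_1=1$ one then finds $m=T-\boldsymbol{x}\boldsymbol{y}^\T=2p_0p_1(\boldsymbol{a}_0-\boldsymbol{a}_1)\hat{\boldsymbol{m}}^\T$ and hence
\begin{equation}\label{Mcq}
M=m^\T m=4p_0^2p_1^2\,|\boldsymbol{a}_0-\boldsymbol{a}_1|^2\;\hat{\boldsymbol{m}}\hat{\boldsymbol{m}}^\T .
\end{equation}
So $M$ has rank at most one, with range spanned by $\hat{\boldsymbol{m}}$ when $M\neq 0$, and $\boldsymbol{y}\parallel\hat{\boldsymbol{m}}$, which is exactly the situation of item (ii) above. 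Substituting \eqref{Mcq} and $Y=\boldsymbol{y}\boldsymbol{y}^\T=(p_0-p_1)^2\hat{\boldsymbol{m}}\hat{\boldsymbol{m}}^\T$ into Eq.~\eqref{D2} turns $D^2$ into a function of $u=(\hat{\boldsymbol{n}}\cdot\hat{\boldsymbol{m}})^2\in[0,1]$ alone, with numerator increasing and denominator decreasing in $u$ (note $(p_0-p_1)^2<1$ whenever $M\neq0$, since then $p_0p_1\neq0$); hence the maximum is at $u=1$, i.e.\ $\hat{\boldsymbol{n}}^\ast=\hat{\boldsymbol{m}}$ and $\Pi_k^{\ast B}=\project{e_k}$. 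If instead $M=0$ then $\rho$ is a product state, $D^2\equiv 0$, and $\hat{\boldsymbol{m}}$ is again an optimal direction (and any other choice works equally well).

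It then remains to evaluate Eq.~\eqref{Qast} for $\hat{\boldsymbol{n}}^\ast=\hat{\boldsymbol{m}}$. The measurement probabilities are $\vec{p}=(p_0,p_1)$, which is precisely the eigenvalue vector $\vec{q}$ of $\rho^B=p_0\project{e_0}+p_1\project{e_1}$; moreover $|\boldsymbol{x}+(-)^kT\hat{\boldsymbol{m}}|=2p_k|\boldsymbol{a}_k|$, so the length-$4$ vector $\vec{w}$ of Eq.~\eqref{wkl} has entries $p_k\cdot\tfrac12(1\pm|\boldsymbol{a}_k|)$, which are exactly the eigenvalues of the block-diagonal state $\rho=\sum_k p_k\rho_k^A\otimes\project{e_k}$, i.e.\ $\vec{w}=\vec{\lambda}$ as multisets. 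Consequently $h_2(\vec{q})=h_2(\vec{p})$ and $h_4(\vec{w})=h_4(\vec{\lambda})$, and Eq.~\eqref{Qast} collapses to $Q_B^\ast(\rho)=0$. Equivalently, one may observe directly that $\{\Pi_k^{\ast B}\}$ leaves $\rho$ invariant, so it gathers all of the (purely classical) correlation and $Q_B^\ast(\rho)=Q_B(\rho)=0$.

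The step I expect to be the real obstacle is the identification $\hat{\boldsymbol{n}}^\ast=\hat{\boldsymbol{m}}$: it rests on recognizing that classical-quantum states have a rank-one $M$ aligned with the classical basis, combined with the monotonicity argument that pins the maximizer of \eqref{D2} to that direction, and it requires a little care in the degenerate product case where the maximizer is not unique — there one checks directly that \emph{any} optimal choice still yields $Q_B^\ast=0$. Once this coincidence is in hand, the substitution into Eq.~\eqref{Qast} is routine bookkeeping.
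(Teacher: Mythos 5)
Your proof is correct, and it reaches the conclusion by a genuinely different decomposition of the zero-discord states than the one used in the paper. The paper invokes the algebraic characterization ($T=0$, or $\rank(T)=1$ with $\boldsymbol{y}$ in the range of $T$) and runs two separate case computations, each time locating $\hat{\boldsymbol{n}}^\ast$ and then checking that Eq.~\eqref{Qast} vanishes; you instead start from the structural classical--quantum form $\rho=\sum_k p_k\,\rho^A_k\otimes\project{e_k}$, read off $\boldsymbol{x},\boldsymbol{y},T$, and show in one stroke that $M=4p_0^2p_1^2|\boldsymbol{a}_0-\boldsymbol{a}_1|^2\,\hat{\boldsymbol{m}}\hat{\boldsymbol{m}}^\T$ is rank one and aligned with $\boldsymbol{y}$, so that the two cases collapse into a single monotonicity argument pinning $\hat{\boldsymbol{n}}^\ast=\hat{\boldsymbol{m}}$. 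Your route buys three things: (a) a unified treatment with no case split; (b) an explicit handling of the degenerate situation $M=0$ (product states), where the maximizer of $D^2$ is not unique --- a point the paper passes over silently in both of its cases (e.g.\ when $T=0$ \emph{and} $\boldsymbol{y}=0$); and (c) a transparent reason why $\vec{w}=\vec{\lambda}$ and $\vec{p}=\vec{q}$, namely the block-diagonal structure of $\rho$ in the basis $\{\ket{e_k}\}$, which replaces the paper's ``simple investigation shows'' with an actual identification of the probability vectors. What the paper's version buys in exchange is that it stays entirely in the $(\boldsymbol{x},\boldsymbol{y},T)$ coordinates used throughout the rest of the text and leans directly on the cited characterization of Lu \emph{et al.}, so no translation between the two equivalent descriptions of zero-discord states is needed. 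Both arguments are sound and yield the same optimal direction and the same final bookkeeping.
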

\begin{proof}
The sufficient condition is a simple consequence of Eq. \eqref{QQast}. To prove the necessary condition, let $\rho$ be a zero-discord on the  Bob's side.   A two-qubit state  has zero discord on  Bob's side   if and only if either (i) $T = 0$, or (ii) $\rank(T )=1$  and $\boldsymbol{y}$ belongs to the range of $T$ \cite{LuPRA2011,Saman}.
We need therefore to prove that   both   cases lead to $Q_B^\ast(\rho)=0$.

(i) If $T=0$, we have from Eq. \eqref{OptDis}
\begin{eqnarray} \nonumber
D^2_{\max}(\widetilde{\boldsymbol{x}}_0,\widetilde{\boldsymbol{x}}_1)=\max_{\hat{\boldsymbol{n}}}\left[\frac{4x^2(\hat{\boldsymbol{n}}\cdot \boldsymbol{y})^2}{(1-(\hat{\boldsymbol{n}}\cdot \boldsymbol{y})^2)^2}\right],
\end{eqnarray}
which takes its maximum value for $\hat{\boldsymbol{n}}^\ast=\hat{\boldsymbol{y}}=\boldsymbol{y}/|\boldsymbol{y}|$. On the other hand, in this case, simple calculation shows that  eigenvalues of $\rho$ and $\rho^B$ are given by
\begin{eqnarray*}
 \lambda_{k,l}&=&\frac{1}{4}\left\{1+(-)^{k}y+(-)^{l} x\right\},\quad q_{k}=\frac{1}{2}\left\{1+(-)^k y\right\},
\end{eqnarray*}
respectively ($k,l\in \{0,1\}$).
Using these and  putting $T=0$ in Eqs. \eqref{wkl} and \eqref{pk}, we find from  Eq. \eqref{Qast} that    $\hat{\boldsymbol{n}}^\ast=\hat{\boldsymbol{y}}$ gives  ${Q_B^\ast}({\rho})=0$.

(ii) For the second case,  i.e. when  $\rank(T )=1$  and $\boldsymbol{y}$ belongs to the range of $T$,  without any loss of generality  we assume that  $\boldsymbol{y}$ and $T$ have the   form $\boldsymbol{y}=y\hat{\boldsymbol{k}}$ and $T=t\hat{\boldsymbol{k}}\hat{\boldsymbol{k}}^\T$,
respectively.  In this case Eq. \eqref{OptDis} leads to
\begin{eqnarray*}
D^2_{\max}(\widetilde{\boldsymbol{x}}_0,\widetilde{\boldsymbol{x}}_1)=\max_{\hat{\boldsymbol{n}}} \left[\frac{\left( {{t}^2 + {x^2}{\kern 1pt} {\kern 1pt} {y}^2 - 2{\kern 1pt} {\kern 1pt} {t}{\kern 1pt} {\kern 1pt} {x}{\kern 1pt} {\kern 1pt} {y}} \right) (\hat{\boldsymbol{n}}\cdot \hat{\boldsymbol{k}})^2}{(1 - {y}^4(\hat{\boldsymbol{n}}\cdot \hat{\boldsymbol{k}})^2 )^2}\right],
\end{eqnarray*}
which takes its maximum value for $\hat{\boldsymbol{n}}^\ast=\hat{\boldsymbol{k}}$. For  such states we have
\begin{eqnarray*}
\lambda_{k,l}&=&\frac{1}{4}\left\{1+(-)^{k}y+(-)^{l} \left|\boldsymbol{x}+(-)^{k}T\hat{\boldsymbol{k}}\right|\right\}, \\
q_{k}&=&\frac{1}{2}\left\{1+(-)^k y\right\},
\end{eqnarray*}
for  eigenvalues of $\rho$ and $\rho^B$, respectively. Moreover,  $w_{k,l}$ and $p_{k}$ are given by  Eqs. \eqref{wkl} and \eqref{pk} with $\boldsymbol{y}=y\hat{\boldsymbol{k}}$ and $T=t\hat{\boldsymbol{k}}\hat{\boldsymbol{k}}^\T$. A simple investigation shows that ${Q_B^\ast}({\rho})=0$  for  $\hat{\boldsymbol{n}}^\ast=\hat{\boldsymbol{k}}$. This completes the proof.
\end{proof}

In what follows we show that the above upper bound is tight in a sense that in  more situations the equality is saturated.  To this aim  we consider  states that we have considered in the last subsection.

{\it (i) Canonical states $\boldsymbol{y}=0$.---}There is no complete solution to the quantum discord of the canonical states, although their geometry and optimization formula are simpler than the general states. Without losing generality we assume $\left| {{t_1}} \right| \ge \left| {{t_2}} \right|$ and then do measurement along the greater semi-axis between $\hat{\boldsymbol{ i}}$ and $\hat{\boldsymbol{ k}}$. We do this and plot the results versus the quantum discord in Fig. \ref{Fig2}  for more than 20000 random states. There are many points on the bisector line showing that the optimized direction is very near to the direction of  our upper bound. Moreover,  non-exact results are not too far from quantum discord and distribution of points near the bisector line shows that the upper bound is very near to the quantum discord.
Canonical states with $T^\T\boldsymbol{x}=0$  have been solved analytically in \cite{AkhtarshenasIJTP2015} and it is easy to see that for this subclass we have $Q_B(\rho)=Q_B^\ast(\rho)$.
\begin{figure}
\mbox{\includegraphics[scale=0.75]{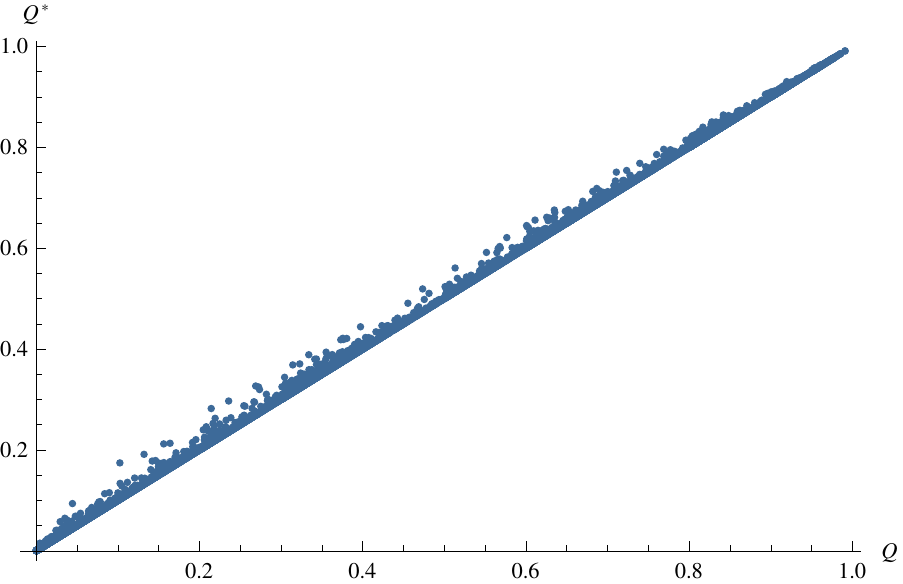}}
\caption{(Color online) The upper bound $Q^\ast(\rho)$ \textit{vs.} quantum discord $Q(\rho)$ for 20000 random canonical states. Points on the bisector line belong to states for which the upper bound is equal to QD. }
\label{Fig2}
\end{figure}

{\it(ii) States for which $\boldsymbol{y}$ is an eigenvector corresponding to the largest eigenvalue of $M$.---}In this case there are  some classes of states for which there exist a good agreement between $Q_B(\rho)$ and $Q^\ast_B(\rho)$. Consider states with
 \begin{eqnarray}
\boldsymbol{x} = x\hat{\boldsymbol{ k}}, \qquad \boldsymbol{y} = y\hat{\boldsymbol{ i}}, \qquad T = \diag\{ {t_1},{t_2},0\}.
 \end{eqnarray}
In this case  $M = \diag\{ {t_1}^2 + {x^2}{y^2},{t_2}^2,0\} $,  and when ${t_1}^2 + {x^2}{y^2} \ge {t_2}^2$ both $Q_B(\rho)$ and $Q_B^\ast(\rho)$ are obtained by measurement  along $\boldsymbol{y}$. In Fig. \ref{Fig3} we have plotted $Q_B^\ast(\rho)$ versus $Q_B(\rho)$  for more than 3000 random states of this category.
\begin{figure}
\mbox{\includegraphics[scale=0.75]{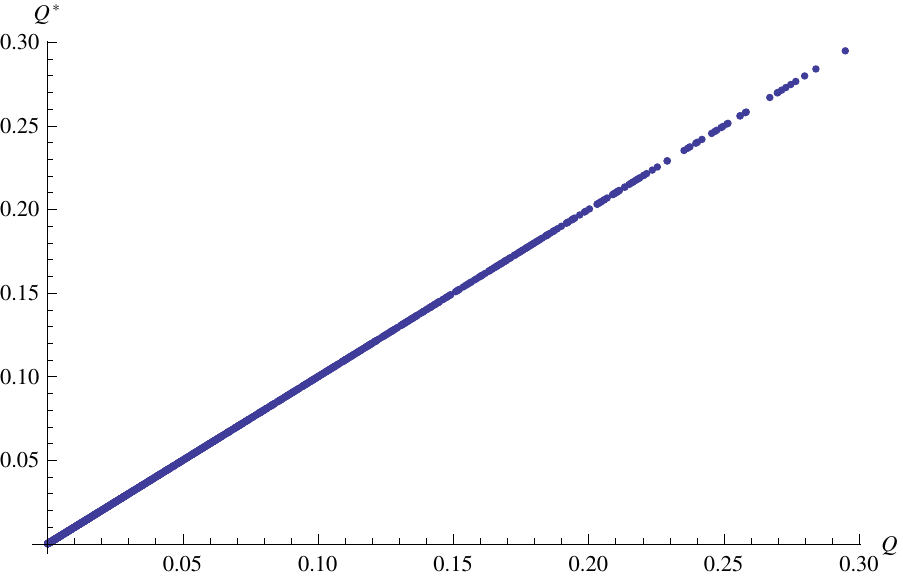}}
\caption{(Color online) The upper bound $Q^\ast(\rho)$ \textit{vs.} quantum discord $Q(\rho)$ for 3000 random states for which $\boldsymbol{y}$ is an eigenvector corresponding to the largest eigenvalue of $M$. Results have high accuracy with $RE<10^{-6}$.}
\label{Fig3}
\end{figure}

{\it (iii) $X$ states.---}For $X$-states we consider the following classes separately.
\begin{enumerate}
\item $M_1\le M_3$. In this case $Q_B^\ast(\rho)$ is very near to $Q_B(\rho)$ and the relative error is less than $10^{-6}$. For 10000 random states of this category any point lies on the bisector line (Fig. \ref{Fig4}).

\begin{figure}
\mbox{\includegraphics[scale=0.75]{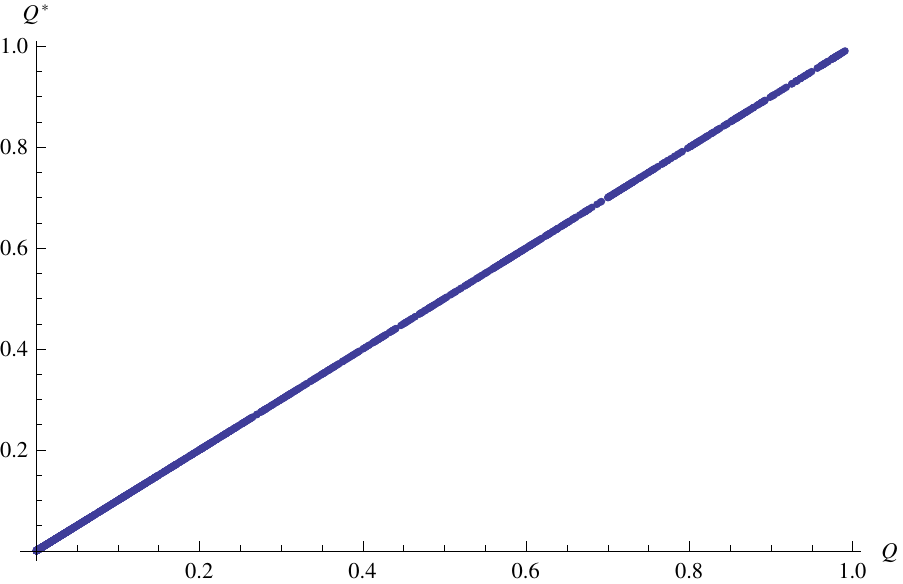}}
\caption{(Color online) The upper bound $Q^\ast(\rho)$ \textit{vs.} quantum discord $Q(\rho)$ for 10000 random  $X$ states satisfying $M_1\le M_3$. Results have high accuracy with $RE < 10^{-6}$.}
\label{Fig4}
\end{figure}

\item $M_1\ge M_3$.   In Figs . \ref{Fig5} and \ref{Fig6}  we plot $Q_B^\ast(\rho)$ \textit{vs.} $Q_B(\rho)$ for 20000 random states satisfying one of the Eqs. \eqref{20} and \eqref{21}, and 5000 random states satisfying Eq. \eqref{19}, respectively.
\end{enumerate}

\begin{figure}
\mbox{\includegraphics[scale=0.75]{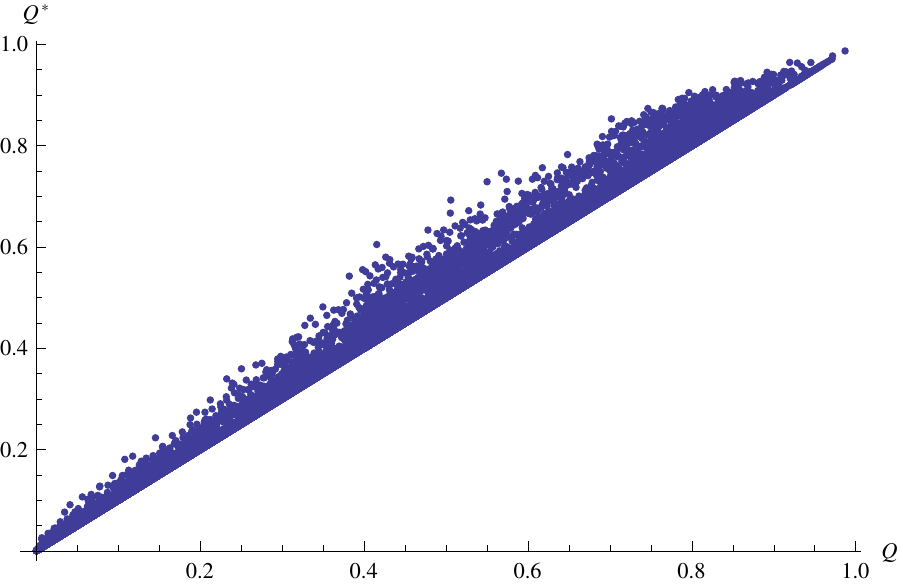}}
\caption{(Color online) The upper bound $Q_B^\ast(\rho)$ \textit{vs.} quantum discord $Q_B(\rho)$ for 20000 random  $X$ states satisfying $M_1\ge M_3$ and one of the Eqs. \eqref{20} and \eqref{21}.  }
\label{Fig5}
\end{figure}

\begin{figure}
\mbox{\includegraphics[scale=0.75]{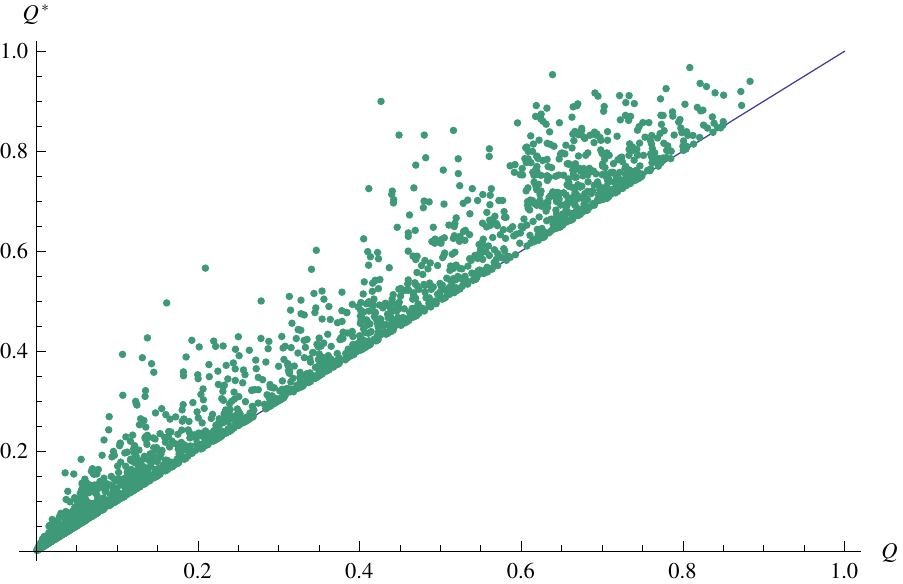}}
\caption{(Color online) The upper bound $Q_B^\ast(\rho)$ \textit{vs.} quantum discord $Q_B(\rho)$ for 5000 random  $X$ states satisfying $M_1\ge M_3$ and Eq. \eqref{19}.  }
\label{Fig6}
\end{figure}

\subsection{$Q_B^\ast(\rho)$ as a tight upper bound}

Now we proceed to employ $Q_B^\ast(\rho)$ as an upper bound and check if it is a tight one. Here we focus on a two parameters state as \cite{AlqasimiPRA2011}
\begin{equation}
\rho = \frac{1}{2}\left( {\begin{array}{*{20}{c}}
a&0&0&a\\
0&{1 - a - b}&0&0\\
0&0&{1 - a + b}&0\\
a&0&0&a
\end{array}} \right)\,,
\label{27}
\end{equation}
where $0 \le a \le 1$ and $a-1 \le b \le 1-a$. The quantum discord of this state is \cite{AlqasimiPRA2011}
\begin{equation}
Q_B(\rho) =\min \{ a,q\},
\end{equation}
where
\begin{eqnarray}
q &=& \frac{a}{2}\,\log_2[\frac{{4\,{a^2}}}{{{{\left( {1 - a} \right)}^2} - {b^2}}}] - \frac{b}{2}\,log_2[\frac{{(1 + b)(1 - a - b)}}{{(1 - b)(1 - a + b)}}]
 \\ \nonumber
 &+& \frac{1}{2}\,\log_2[\frac{{4({{(1 - a)}^2} - {b^2})}}{{(1 - {b^2})(1 - {a^2} - {b^2})}}] - \frac{{\sqrt {{a^2} + {b^2}} }}{2}\,\log_2[\frac{{1 + \sqrt {{a^2} + {b^2}} }}{{1 - \sqrt {{a^2} + {b^2}} }}].
\end{eqnarray}
Here $a$ and $q$ are obtained by measurements $\hat{\boldsymbol{n}}=\hat {\boldsymbol{i}}$ and $\hat{\boldsymbol{n}}=\hat{\boldsymbol{ k}}$, respectively.
Marginal coherence vectors and the correlation matrix of this state are given by
\begin{equation}
\boldsymbol{x} =  - \boldsymbol{y} = \left( {\begin{array}{c}
0\\
0\\
{ - b}
\end{array}} \right),
\qquad T = \left( {\begin{array}{ccc}
a & 0 & 0\\ 0 &  - a & 0 \\ 0 & 0 & 2a - 1
\end{array}} \right).
\end{equation}
On the other hand, the maximal distance of this state is
\begin{equation}
D^2_{\max}=\max \left\{a^2,\frac{{{{\left( {(2a - 1) + {b^2}} \right)}^2}}}{{{{\left( {1 - {b^2}} \right)}^2}}}\right\},
\end{equation}
where  $a^2$ and $\frac{{{{\left( {(2a - 1) + {b^2}} \right)}^2}}}{{{{\left( {1 - {b^2}} \right)}^2}}}$ are  obtained by measurements $\hat{\boldsymbol{n}}=\hat {\boldsymbol{i}}$ and $\hat{\boldsymbol{n}}=\hat{\boldsymbol{ k}}$, respectively.

Evidently, for $b=0$ we have  $Q_B(\rho) =  Q_B^\ast(\rho)$. In Fig. \ref{Fig7}  we plot $Q_B(\rho)$ and $Q_B^\ast(\rho)$ as a function of $a$ for two cases $b=0.3$ and $b=0.7$, respectively. Except at a very small  interval, we have   $Q_B(\rho) =  Q_B^\ast(\rho)$. A comparison of these figures reveals  that  as $b$ increases the amounts of QD decreases and also the interval in which $Q_B(\rho) \neq  Q_B^\ast(\rho)$, grows up. Therefore, we observe that in high discordant states $Q_B^\ast(\rho)$ is more precise.

\begin{figure}
\centering
\subfigure[]{\includegraphics[width=0.4\textwidth]{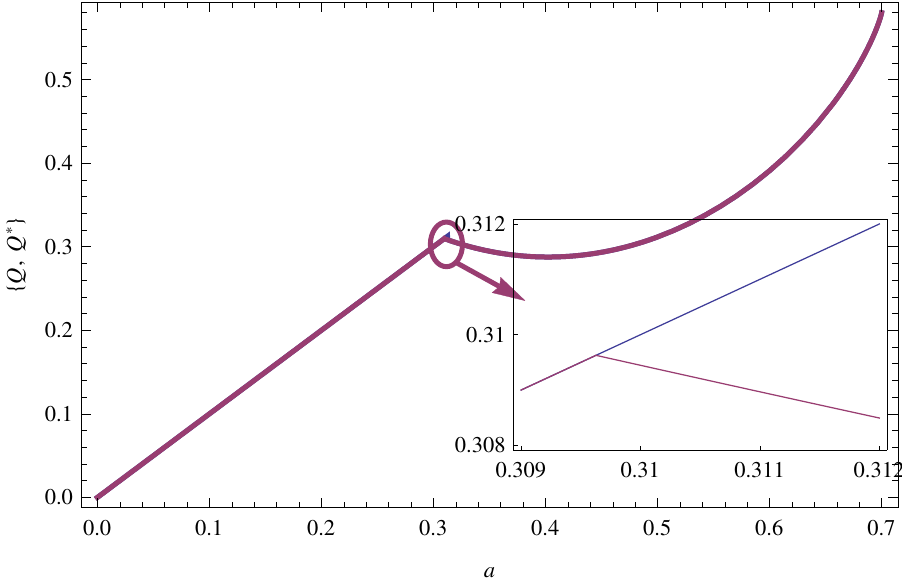}}
\subfigure[]{\includegraphics[width=0.4\textwidth]{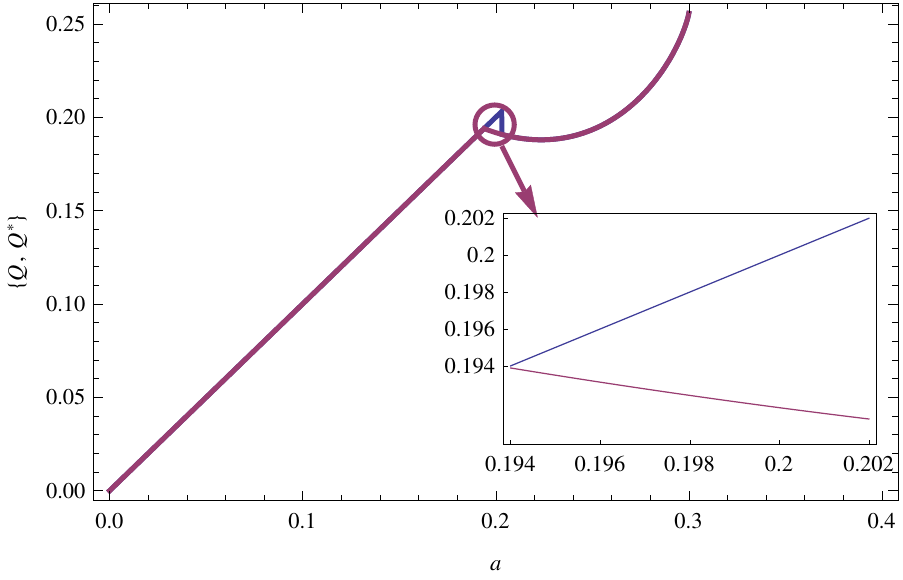}}
\caption{ (Color online) $Q_B(\rho)$ and $Q_B^\ast(\rho)$ as  a function of $a$ for (a)  $b=0.3$ and (b)  $b=0.7$. The insets show behaviour in   the rigion near  $Q_B(\rho) \neq  Q_B^\ast(\rho)$.}
\label{Fig7}
\end{figure}

\section{Conclusion}

Here we have defined   $Q^\ast(\rho)$ as the correlation that Bob can extract about Alice's qubit by means of the most distinguishable measurements, i.e.  measurements that Bob  steers Alice to the most distinguishable states.  For some  classes of states, we have shown  that this quantity is equal to the quantum discord  $Q(\rho)$. Although    $Q^\ast(\rho)$   may contain some classical correlations,   the amount of classical correlations is  not so much in particular  for  high discordant states. The presented quantity provides  a faithful and  tight upper bound for the quantum discord. Marginal states  at  high discordant states  have high mixedness and so they are near to the  Bell-diagonal states, for which $Q^\ast(\rho)$ coincides exactly with  $Q(\rho)$.

The significance of our method  comes from two facts: (i) the tightness of the provided bound and (ii) the physical interpretation of this bound. As we mentioned above, the provided upper bound is faithful and tight, meaning that the bound  vanishes if and only if the
bounded quantity vanishes.  This, in turn, indicates that a nonzero value for the upper bound is a guarantee for the nonzero quantum discord, a fact  that is not valid, in general,  for an arbitrary  upper bound. In other words, for zero discord states with possible classical correlations, the most distinguishable measurement washes out all  classical correlations.  On the other hand, the physical interpretation of our method is related to its relevance to the notion of  quantum distinguishability. Actually a look at measure of distinguishability of two states, given by  Eq. \eqref{TrDistance} for  outcomes of the Alice's side when Bob performs  a von Neumann measurement on his particle, shows that this measure is closely related  to the notion of  minimum-error  probability of discrimination of two states for equal a priori probability \cite{NielsenBook2000}.

The generalization of the above method is not straightforward as there are  not  well known geometries like Bloch sphere and quantum steering ellipsoid for arbitrary bipartite systems. For a general bipartite state  with arbitrary dimension for Bob's particle, when  Bob performs POVM measurement on his particle with $n$ outcomes, the state of the Alice's side steers to $\rho^A_k$ with probability $p_k$  corresponding to each outcome $k=1,\cdots,n$. Following the route of two-qubit system, we left therefore with the problem of finding the best possible measurement on the Bob's side with the outputs that are most distinguishable on the Alice's side. This, however, is not an easy task to treat in general and further study on the subject is under our investigation.

\acknowledgments This work was supported by Ferdowsi University of Mashhad under Grant No.  3/38668 (1394/06/31).

\end{document}